\title{Robust Output-Feedback MPC for Nonlinear Systems
  with Applications to Robotic Exploration}
\author{Scott~Brown,~\IEEEmembership{Student Member,~IEEE},
  Mohammad~Khajenejad,~\IEEEmembership{Member,~IEEE},
  Aamodh~Suresh,~\IEEEmembership{Member,~IEEE}, and
  Sonia~Mart{\'\i}nez,~\IEEEmembership{Fellow,~IEEE} \thanks{S. Brown
    and S. Mart{\'\i}nez are with the Mechanical and Aerospace
    Engineering Department of the Jacobs School of Engineering,
    University of California, San Diego, La Jolla, San Diego, CA, USA
    (e-mail: \{sab007, soniamd\}@ucsd.edu). M. Khajenejad is with the
    departments of Mechanical Engineering and Electrical and Computer
    Engineering, The University of Tulsa (e-mail:
    mohammad-khajenejad@utulsa.edu).  A. Suresh is with the DEVCOM US
    Army Research Lab, Adelphi, MD, USA (e-mail: aamodh@gmail.com).}
  \thanks{This work was partially supported by AI2C ARL grant
    W911NF-23-2-0009 and NSF grant 1947050.}  }
\newcommand{\bulletsym}{\hbox{$\bullet$}}
\newcommand{\bulletend}{\relax\ifmmode\else\unskip\hfill\fi\bulletsym}
\newcommand{\ol}[1]{\overline{#1}}
\newcommand{\ul}[1]{\underline{#1}}
\newcommand{\real}{\mathbb{R}}
\newtheorem{prop}{Proposition}
\newtheorem{thm}{Theorem}
\newtheorem{assumption}{Assumption}
\newtheorem{defn}{Definition}
\newtheorem{rem}{Remark}
\begin{document}
\maketitle
\thispagestyle{empty}

\begin{abstract}
  This paper introduces a novel method for robust output-feedback
  model predictive control (MPC) for a class of nonlinear
  discrete-time systems. We propose a novel interval-valued predictor
  which, given an initial estimate of the state, produces intervals
  which are guaranteed to contain the future trajectory of the
  system. By parameterizing the control input with an initial
  stabilizing feedback term, we are able to reduce the width of the
  predicted state intervals compared to existing methods. We
  demonstrate this through a numerical comparison where we show that
  our controller performs better in the presence of large amounts of
  noise.  Finally, we present a simulation study of a robot navigation
  scenario, where we incorporate a time-varying entropy term into the
  cost function in order to autonomously explore an uncertain area.
\end{abstract}

\begin{IEEEkeywords}
  Predictive control for nonlinear systems, nonlinear output feedback,
  robust control
\end{IEEEkeywords}

\newcommand{\pmbox}[1]%
{{}^\ulcorner_\llcorner\hspace{-0.2em}{#1}\hspace{-0.1em}^\urcorner_\lrcorner}
\newcommand{\X}{\mathcal{X}}
\newcommand{\U}{\mathcal{U}}
\newcommand{\E}{\mathcal{E}}

\section{Introduction}
\IEEEPARstart{M}{odel} predictive control (MPC) is a well-known and effective method
for controlling both linear and nonlinear systems subject to state and
input constraints. It has received considerable research attention
over the last three decades, which has led to the development of
several different approaches, including robust and tube-based methods
with various ways of representing the predicted trajectories
\cite{JR-DM-MD:17}.

In the context of robotics, many researchers have focused on the
problem of planning trajectories for robots operating in uncertain
environments which must be explored.  Sampling-based planners
\cite{SK-EF:11, ME-MS:14} are effective at planning long trajectories
with lower computational cost than methods which rely on an exhaustive
search. On the other hand, MPC-based planning methods continue to be
studied for use on robotic manipulators \cite{JN-JK-VB-FA-ST:20},
marine and road vehicles
\cite{HW-YS:23,SD-UM-MD-DO-TM-AM-SF:20,QS-JZ-AEK-ILJ:21}, and
multi-agent systems \cite{AT-LTB-AMJ:24}.

Much of the MPC literature assumes access to direct measurements of
the state. When these measurements are not available, however, the
problem of \emph{robustly} satisfying all state and input constraints
becomes much more challenging. This has been the focus of a large body
of research into output-feedback MPC. For example,
\cite{FDB-MAM-FA:18} develops a robust controller for linear systems
based on moving-horizon estimation.

A straightforward approach to output-feedback MPC is to incorporate an
observer to generate state estimates, which must be analyzed in
conjunction with the MPC controller \cite{MAR-DO:00}. A recent work
\cite{ADRDS-DE-TR-XP:22} uses an \emph{interval observer} to construct
a robust output-feedback MPC controller for linear systems. Interval
observers can be easily modified to generate predictions forward in
time, by removing the measurement term. Compared to existing methods
for tube-based output-feedback MPC, this method is more
computationally tractable. Our work aims to extend the interval
predictor approach to a class of nonlinear systems, so that it may be
used more readily in robot motion planning applications.

\textit{Statement of Contributions.} In this paper, we propose a novel
output-feedback MPC controller which provides guaranteed safety and
stability for a class of nonlinear systems. The MPC controller is
based on an interval observer and predictor, and is the first of its
kind for nonlinear systems. We show that, under some assumptions
related to the observability and controllability of the system, the
controller is able to maintain correct and bounded estimates and
predictions of the system state. Furthermore, by parameterizing the
control input with an initial stabilizing feedback term, we are able
to reduce the width of the predicted state intervals, thus reducing
the conservatism of the controller. The structure of this nonlinear
predictor is compatible with existing off-the-shelf MPC solvers,
meaning that the implementation (and computational complexity) is
similar to nominal ones. We prove that under some mild assumptions on
the cost function and terminal region, the controller achieves
recursive feasibility and asymptotic stability. Finally, we
demonstrate the controller on two example systems, a linear
continuously stirred tank reactor (CSTR) and a unicycle robot.

\emph{Notation:} The symbols $\real^n$, $\real^{n \times p}$,
$\mathbb{N}$, and $\mathbb{N}_n$ denote the $n$-dimensional Euclidean
space, the sets of $n$ by $p$ matrices, natural numbers (including 0),
and natural numbers from 1 to $n$, respectively. For
$M \in \real^{n \times p}$, $M_{ij}$ denotes $M$'s entry in the $i$'th
row and the $j$'th column,
$M^{\oplus}\triangleq \max(M,\mathbf{0}_{n,p})$,
$M^{\ominus}=M^{\oplus}-M$ and $|M|\triangleq M^{\oplus}+M^{\ominus}$,
where $\mathbf{0}_{n,p}$ is the zero matrix in $\real^{n \times
  p}$. For a matrix $A \in \real^{n \times n}$ we denote
\begin{gather*}
  \pmbox{A} \triangleq \begin{bmatrix}
    A^\oplus & -A^\ominus \\ -A^\ominus & A^\oplus
  \end{bmatrix} \in \real^{2n \times 2n}.
\end{gather*}
The symbol $\rho(M)$ denotes the spectral radius of $M$. For vectors
in $\real^n$, the comparisons $>$ and $<$ are considered
element-wise. Finally, an interval $[\ul{z},\ol{z}] \subset \real^n$
is the set of all real vectors $z \in \real^{n}$ that satisfies
$\ul{z} \le z \le \ol{z}$. Unless otherwise specified, for an interval
$[\ul{z}, \ol{z}]$, we define $\delta^z \triangleq \ol{z} - \ul{z}$
and for $[\ul{z}_k, \ol{z}_k]$,
$\delta^z_k \triangleq \ol{z}_k - \ul{z}_k$. A subscript $k$ always
denotes a time step, not a component of a vector.

\section{Preliminaries}
Next, we introduce some definitions and results from interval analysis
that will be leveraged in our interval observer and predictor
design. In the following definitions let $f$ be a function
$f : \X \subseteq \real^n \to \real^n$.

\begin{defn}
  The function $f$ is \emph{Jacobian sign-stable} (JSS) if the sign of
  each element of the Jacobian matrix $J_f(x)$ is constant for all
  $x \in \X$.
\end{defn}
\begin{prop}
  {\cite[Proposition 2]{MK-FS-SZY:22a}}
  \label{prop:JSS_decomp}
  If the Jacobian of $f$ is bounded, i.e., it satisfies
  $\ul{J}_f \le J_f(x) \le \ol{J}_f$ for all $x \in \X$, then $f$ can
  be written in additive-remainder form,
  \begin{align*}
     f(x) = A x + \mu(x),
  \end{align*}
  where the $(i,j)^{th}$ element of $A \in \real^{n \times n}$
satisfies
  \begin{align*}
    A_{ij} = (\ul{J}_f)_{ij} \ \text{ or } A_{ij} = (\ol{J}_f)_{ij}
  \end{align*}
  and the function $\mu$ is JSS.
  \bulletend
\end{prop}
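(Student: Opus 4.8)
The plan is to construct the matrix $A$ explicitly and then verify that the resulting remainder $\mu(x) \triangleq f(x) - Ax$ is JSS by directly computing its Jacobian. The essential observation is that differentiation is linear and the Jacobian of the linear map $x \mapsto Ax$ is the constant matrix $A$; hence $J_\mu(x) = J_f(x) - A$ for every $x \in \X$. The decomposition then reduces to choosing $A$ so that every entry of this difference is sign-definite over the whole domain.

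First I would fix, for each index pair $(i,j)$ independently, a choice $A_{ij} \in \{(\ul{J}_f)_{ij}, (\ol{J}_f)_{ij}\}$, and claim that any such choice suffices. Examining the $(i,j)$ entry of the remainder's Jacobian, $(J_\mu(x))_{ij} = (J_f(x))_{ij} - A_{ij}$, I would invoke the componentwise bound $(\ul{J}_f)_{ij} \le (J_f(x))_{ij} \le (\ol{J}_f)_{ij}$ to pin its sign: taking $A_{ij}$ to be the lower bound forces $(J_\mu(x))_{ij} \ge 0$ for all $x$, while taking it to be the upper bound forces $(J_\mu(x))_{ij} \le 0$ for all $x$. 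Either way the entry never changes sign across $\X$. Assembling these per-entry conclusions, $J_\mu(x)$ has a sign pattern that is constant in $x$, which is exactly the definition of $\mu$ being JSS, completing the decomposition $f(x) = Ax + \mu(x)$ with the stated properties.

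I do not expect a genuine obstacle here; the statement is almost immediate once $J_\mu = J_f - A$ is written down. The only points requiring a little care are (i) recognizing that subtracting an extremal Jacobian entry is precisely what clamps the corresponding remainder entry to a single sign, so that the freedom in each choice of $A_{ij}$ maps onto the freedom in the sign of that entry of $J_\mu$, and (ii) checking that the degenerate situation in which an entry attains its bound identically, so that $(J_\mu(x))_{ij} \equiv 0$, is still admissible under the ``constant sign'' convention, since a uniformly zero entry does not violate sign constancy. A convenient canonical choice is $A_{ij} = (\ul{J}_f)_{ij}$ for all $(i,j)$, which makes $J_\mu(x)$ elementwise nonnegative throughout $\X$.
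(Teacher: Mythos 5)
Your proof is correct, and it is essentially the only (and the standard) argument: the paper itself does not prove this proposition but imports it by citation, and your construction---setting $\mu(x) \triangleq f(x) - Ax$ so that $J_\mu(x) = J_f(x) - A$, then choosing each $A_{ij}$ at one of the extremal Jacobian bounds to clamp $(J_\mu(x))_{ij}$ to a single sign over all of $\X$---is precisely the reasoning behind the cited result. Your side remark on degenerate entries is resolved by the convention used in that literature, where Jacobian sign-stability means each entry is everywhere nonnegative or everywhere nonpositive (rather than having strictly constant $\mathrm{sgn}$ value), so the case where an entry touches its bound poses no difficulty.
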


\begin{defn}
  \cite[Definition 4]{LY-OM-NO:19}
  \label{def:decomp}
  A function $f_d: \X \times \X \to \real^{n}$ is a
  \emph{mixed-monotone decomposition function} for $f$ if
  \begin{enumerate}
  \item $f_d(x,x)=f(x)$,
  \item $f_d$ is monotonically increasing in its first argument,
  \item $f_d$ is monotonically decreasing in its second argument.
  \end{enumerate}
\end{defn}

\begin{prop}
  \cite[Proposition 4 \& Lemma 3]{MK-FS-SZY:22a}
  \label{prop:tight_decomp}
  Suppose $f$ is JSS and has bounded Jacobian. Then, the $i^{th}$
  component of a mixed-monotone decomposition function $f_d$ is given
  by
  \begin{align*}
    f_{d,i}(x_1,x_2) \triangleq f_i(D^i x_1 + (I_n - D^i) x_2),
  \end{align*}
  with
  $D^i =
  \mathrm{diag}(\max(\mathrm{sgn}((\ol{J}_f)_i),\mathbf{0}_{1,n}))$.
  Additionally, for any interval $[\ul{x}, \ol{x}] \subseteq \X$, it
  holds that $\delta^f_d \leq \ol{F}_f \delta^x$, where
  $\ol{F}_f \triangleq \ol{J}_f^\oplus + \ul{J}_f^\ominus$ and
  $\delta^f_d \triangleq f_d(\ul{x},\ol{x}) - f_d(\ol{x},
  \ul{x})$. \bulletend
\end{prop}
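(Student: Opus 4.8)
The plan is to establish the claim in two stages: first that the proposed $f_d$ satisfies the three defining properties of Definition~\ref{def:decomp}, and second that it obeys the width bound. The structural fact I would exploit throughout is that, because $D^i$ is diagonal with $(D^i)_{jj}=1$ precisely when $(\ol{J}_f)_{ij}>0$ (equivalently, by Jacobian sign-stability, when $\partial f_i/\partial x_j>0$ everywhere on $\X$) and $(D^i)_{jj}=0$ otherwise, the argument $y = D^i x_1 + (I_n - D^i)x_2$ selects the coordinate of $x_1$ in directions where $f_i$ is increasing and the coordinate of $x_2$ in directions where $f_i$ is non-increasing.

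For the three properties: property~(1) is immediate since $f_{d,i}(x,x)=f_i(D^i x + (I_n-D^i)x)=f_i(x)$. For (2) and (3) I would differentiate via the chain rule. Because $D^i$ is diagonal, $y_k$ depends on $(x_1)_j$ only through $k=j$, giving $\partial f_{d,i}/\partial (x_1)_j = (J_f(y))_{ij}\,(D^i)_{jj}$ and $\partial f_{d,i}/\partial (x_2)_j = (J_f(y))_{ij}\,(1-(D^i)_{jj})$. Invoking sign-stability: when $(J_f)_{ij}>0$ we have $(D^i)_{jj}=1$, so the first partial equals $(J_f(y))_{ij}\ge 0$ and the second vanishes; when $(J_f)_{ij}\le 0$ we have $(D^i)_{jj}=0$, so the first vanishes and the second equals $(J_f(y))_{ij}\le 0$. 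In every case the first partial is non-negative and the second non-positive, which is exactly monotone-increasing in $x_1$ and monotone-decreasing in $x_2$.

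For the width bound I would write $u \triangleq D^i\ol{x} + (I_n-D^i)\ul{x}$ and $v \triangleq D^i\ul{x} + (I_n-D^i)\ol{x}$, so that the $i$th interval width is $f_i(u)-f_i(v)$ and $u-v = (2D^i-I_n)\delta^x$; note $(u-v)_j = \delta^x_j$ when $(D^i)_{jj}=1$ and $(u-v)_j=-\delta^x_j$ when $(D^i)_{jj}=0$. Since $u,v$ and the whole segment between them lie in the box $[\ul{x},\ol{x}]\subseteq\X$, the integral mean-value theorem gives $f_i(u)-f_i(v) = G\,(u-v)$ with $G \triangleq \int_0^1 (J_f(v+t(u-v)))_i\,dt$, a row vector satisfying $(\ul{J}_f)_{ij}\le G_j\le (\ol{J}_f)_{ij}$. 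I would then bound termwise: where $(J_f)_{ij}>0$ we get $G_j(u-v)_j = G_j\delta^x_j \le (\ol{J}_f)_{ij}\delta^x_j = (\ol{F}_f)_{ij}\delta^x_j$ (here $(\ol{J}_f^\oplus)_{ij}=(\ol{J}_f)_{ij}$ and $(\ul{J}_f^\ominus)_{ij}=0$); where $(J_f)_{ij}\le 0$ we get $G_j(u-v)_j = -G_j\delta^x_j \le -(\ul{J}_f)_{ij}\delta^x_j = (\ol{F}_f)_{ij}\delta^x_j$ (here $(\ol{J}_f^\oplus)_{ij}=0$ and $(\ul{J}_f^\ominus)_{ij}=-(\ul{J}_f)_{ij}$). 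Summing over $j$ yields the stated bound.

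The only real obstacle is the case-bookkeeping in the last step: one must verify that in each sign regime the loose bound on the integrated Jacobian entry $G_j$ collapses exactly onto the corresponding entry of $\ol{F}_f = \ol{J}_f^\oplus + \ul{J}_f^\ominus$, and that the segment along which the mean-value theorem is applied stays inside $\X$ (which follows from convexity of the box $[\ul{x},\ol{x}]$, since $u$ and $v$ are coordinatewise selections from its corners). I would also flag that, as written, $\delta^f_d = f_d(\ul{x},\ol{x})-f_d(\ol{x},\ul{x})$ is the negative of the interval width and hence trivially $\le \ol{F}_f\delta^x$; the substantive statement—and what the argument above establishes—is that the width $f_d(\ol{x},\ul{x})-f_d(\ul{x},\ol{x})$ is bounded by $\ol{F}_f\delta^x$.
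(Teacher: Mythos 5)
Note first that the paper does not prove this proposition at all: it is imported verbatim from \cite[Proposition~4 \& Lemma~3]{MK-FS-SZY:22a}, so there is no in-paper argument to compare yours against, and your proposal must be judged on its own merits. On those merits it is correct and essentially complete: the chain-rule computation $\partial f_{d,i}/\partial (x_1)_j = (J_f(y))_{ij}(D^i)_{jj}$, $\partial f_{d,i}/\partial (x_2)_j = (J_f(y))_{ij}(1-(D^i)_{jj})$ combined with sign-stability is the standard route to the three properties of Definition~\ref{def:decomp}, and the integral mean-value theorem with the sign split on $(D^i)_{jj}$ correctly yields the width bound. Two refinements are worth recording. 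First, your case analysis tacitly assumes the Jacobian bounds are sign-consistent with the JSS sign pattern, e.g.\ that $(J_f)_{ij}\le 0$ on $\X$ forces $(\ol{J}_f)_{ij}\le 0$, hence $(D^i)_{jj}=0$ and $(\ol{J}_f^\oplus)_{ij}=0$; this holds for tight elementwise bounds (the intended reading), and in any case the bound survives without it, since for $(D^i)_{jj}=1$ one has $G_j(u-v)_j \le G_j^\oplus\delta^x_j \le (\ol{J}_f^\oplus)_{ij}\delta^x_j \le (\ol{F}_f)_{ij}\delta^x_j$, and symmetrically via $\ul{J}_f^\ominus$ when $(D^i)_{jj}=0$, so the JSS hypothesis is not even needed for this half of the claim. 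Second, your closing flag is accurate and identifies a genuine transcription error: as written in this paper, $\delta^f_d = f_d(\ul{x},\ol{x})-f_d(\ol{x},\ul{x})$ is the lower bound minus the upper bound of $f$ over the interval, hence nonpositive, making the stated inequality vacuous; the substantive inequality, which your argument proves and which is what the paper actually uses later (the comparison systems in the proofs of Propositions~\ref{prop:observer} and~\ref{prop:predictor} bound the \emph{width} of the decomposition by $\ol{F}_\mu\delta^x$), is $f_d(\ol{x},\ul{x})-f_d(\ul{x},\ol{x}) \le \ol{F}_f\delta^x$, i.e., the two arguments in the definition of $\delta^f_d$ should be swapped.
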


Consequently, applying Proposition \ref{prop:tight_decomp} to the JSS
remainder from Proposition \ref{prop:JSS_decomp} yields a tight
mixed-monotone decomposition function. Further details can be found in
\cite{MK-FS-SZY:22a}.  Finally, we review a well-known result which
bounds the product of a matrix with an interval.
\begin{prop}\cite[Lemma 1]{DE-TR-SC-AZ:13}\label{prop:bounding}
  Let $A \in \real^{p \times n}$ and
  $\ul{x} \leq x \leq \ol{x} \in \real^n$. Then,
  $A^\oplus \ul{x}-A^{\ominus}\ol{x} \leq Ax \leq
  A^\oplus\ol{x}-A^{\ominus}\ul{x}$.
\end{prop}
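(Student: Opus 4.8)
The plan is to decompose $A$ into its nonnegative and nonpositive parts and then exploit the fact that left-multiplication by an entrywise-nonnegative matrix preserves the elementwise partial order on $\real^n$. First I would record the decomposition $A = A^\oplus - A^\ominus$, which is immediate from the definitions $A^\oplus = \max(A,\mathbf{0}_{p,n})$ and $A^\ominus = A^\oplus - A$, together with the key structural observation that both $A^\oplus$ and $A^\ominus$ have only nonnegative entries (indeed $A^\ominus = \max(-A,\mathbf{0}_{p,n}) \ge \mathbf{0}_{p,n}$).

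The central fact I would isolate is monotonicity of multiplication by a nonnegative matrix: if $B \in \real^{p\times n}$ satisfies $B \ge \mathbf{0}_{p,n}$ entrywise and $u \le v$ in $\real^n$, then $Bu \le Bv$. This holds because the $i$'th component of $B(v-u)$ equals $\sum_{j=1}^n B_{ij}(v_j - u_j)$, a sum of products of nonnegative terms, hence nonnegative. This elementary lemma does all the work.

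Applying it twice, with $B = A^\oplus$ and then $B = A^\ominus$, to the hypothesis $\ul{x} \le x \le \ol{x}$ yields the two chains $A^\oplus \ul{x} \le A^\oplus x \le A^\oplus \ol{x}$ and $A^\ominus \ul{x} \le A^\ominus x \le A^\ominus \ol{x}$. Negating the second chain reverses its inequalities, giving $-A^\ominus \ol{x} \le -A^\ominus x \le -A^\ominus \ul{x}$. Adding the lower endpoints of the first chain and the negated second chain, and using $Ax = A^\oplus x - A^\ominus x$, produces the claimed lower bound $A^\oplus \ul{x} - A^\ominus \ol{x} \le Ax$; adding the upper endpoints produces the claimed upper bound $Ax \le A^\oplus \ol{x} - A^\ominus \ul{x}$.

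There is no genuine obstacle here: the statement is a routine consequence of order-preservation under nonnegative matrix multiplication. The only point requiring care is the sign bookkeeping when negating the $A^\ominus$ chain, and confirming that $A^\ominus$ is in fact entrywise nonnegative so that the monotonicity step legitimately applies to it; everything else is direct summation of inequalities.
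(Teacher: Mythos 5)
Your proof is correct, and it is the standard argument for this result: the paper itself does not prove Proposition~\ref{prop:bounding} but simply cites it from the reference \cite{DE-TR-SC-AZ:13}, where the proof proceeds exactly as you describe, writing $A = A^\oplus - A^\ominus$ with both parts entrywise nonnegative and using order-preservation of multiplication by nonnegative matrices. Your sign bookkeeping when negating the $A^\ominus$ chain is handled correctly, so there is nothing to add.
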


\section{Problem Formulation}
\subsection{System Dynamics}
Consider a discrete-time nonlinear system of the form
\begin{align}
  \label{eq:system}
  \begin{split}
    x_{k+1} &= f(x_k) + B u_k + w_k, \\
    y_k &= C x_k + v_k,
  \end{split}
\end{align}
where $x_k \in \real^n$ is the state, $u_k \in \real^m$ is the input,
$y_k \in \real^p$ is the output,
$w_k \in [\ul{w}, \ol{w}] \subset \real^n$ is the bounded process
noise and $v_k \in [\ul{v}, \ol{v}] \subset \real^p$ is the bounded
measurement noise. The function $f$, the matrices $B$ and $C$, and the
bounds on the noise are all known.

\subsection{Control Objective}
The objective is the minimization of a cost associated with the state
and control input of the system. The cost function is defined over a
horizon of length $N$. Let $\mathbf{x}$ and $\mathbf{u}$ denote the
$N$ and $(N-1)$-long trajectories of $x_k$ and $u_k$,
respectively. Then, for a stage cost $c$ and terminal cost $V_f$,
the cost is
\begin{gather}
  \label{eq:cost-proto}
  V_N(\mathbf{x}, \mathbf{u}) = \sum_{k = 0}^{N-1} c(x_k, u_k) + V_f(x_N).
\end{gather}
Since we do not have access to direct measurements of the state, this
cost function will be modified to use both \emph{estimates} and
\emph{predictions} of the state, as described in the next section.

In addition to minimizing the cost, we also require that the system
satisfy state and input constraints. In other words, for some (closed)
sets $\X \subseteq \real^n$ and $\U \subseteq \real^m$, the controller
should guarantee that $x_k \in \X$, and $u_k \in \U$ for all $k$.

\subsection{System Assumptions}
\begin{assumption}
  \label{ass:bounded}
  The function $f$ has a bounded Jacobian over $\mathcal{X}$, In other
  words, for all $x \in \mathcal{X}$,
  $\ul{J}_f \le J_f(x) \le \ol{J}_f$.
\end{assumption}

Using Proposition~\ref{prop:JSS_decomp}, we decompose $f$ into a
linear plus a JSS remainder term,
\begin{gather}
  \label{eq:decomp}
  f(x) = Ax + \mu(x).
\end{gather}

To obtain bounded predictions of the state trajectories, we require an
assumption related to stabilizability and detectability. As in
Proposition~\ref{prop:tight_decomp}, we define
$\ol{F}_\mu = \ol{J}_\mu^\oplus + \ul{J}_\mu^\ominus$.
\begin{assumption}
  \label{ass:L}
  There exist matrices $L$ and
  $K$ such that $\rho(|A-LC| + \ol{F}_\mu) < 1$ and
  $\rho(|A-BK| + \ol{F}_\mu) < 1$.
\end{assumption}
\begin{rem}
  Assumption~\ref{ass:L} requires that a \emph{linear} controller and
  observer be sufficient to stabilize the error system of the
  predictor. Intuitively, $\ol{F}_\mu$ captures the most extreme
  behavior of the nonlinearity, which must not destabilize the
  prediction error. We have studied observer gain design thoroughly in
  our previous work \cite[Theorem 2]{MK-SB-SM:24arxiv}, which can be
  adapted to compute the gains $L$ and $K$ and verify the assumption.
\end{rem}

\section{Interval Observer and Predictor Design}

\begin{figure*}[t!]
  \begin{gather}
    \label{eq:predictor}
    \begin{bmatrix}
      \ol{z}_{k, \ell+1} \\ \ul{z}_{k, \ell+1} \\
      \ol{e}_{k, \ell+1} \\ \ul{e}_{k, \ell+1}
    \end{bmatrix}
    = \begin{bmatrix}
      \pmbox{A - BK} & \pmbox{-BK} \\
      0 & \pmbox{A - LC}
    \end{bmatrix}
    \begin{bmatrix}
      \ol{z}_{k, \ell} \\ \ul{z}_{k, \ell} \\
      \ol{e}_{k, \ell} \\ \ul{e}_{k, \ell}
    \end{bmatrix}
    + \begin{bmatrix}
      \ol{w} + Bu'_{k,\ell} \\
      \ul{w} + Bu'_{k,\ell} \\
      \ol{v}_L - \hat{v}_L + \hat{w} - \ul{w} \\
      \ul{v}_L - \hat{v}_L + \hat{w} - \ol{w}
    \end{bmatrix}
    + \begin{bmatrix}
      \mu_d(\ol{z}_{k,\ell}, \ul{z}_{k,\ell}) \\
      \mu_d(\ul{z}_{k,\ell}, \ol{z}_{k,\ell}) \\
      \mu_d(\ol{\xi}_{k,\ell}, \ul{\xi}_{k,\ell})
      - \mu_d(\ul{z}_{k,\ell}, \ol{z}_{k,\ell}) \\
      \mu_d(\ul{\xi}_{k,\ell}, \ol{\xi}_{k,\ell})
      - \mu_d(\ol{z}_{k,\ell}, \ul{z}_{k,\ell})
    \end{bmatrix}
  \end{gather}
  \hrulefill
\end{figure*}

\subsection{Interval Observer}
In this section we introduce a basic interval observer, which
computes interval-valued state estimates using the measurement signal. The
interval observer has dynamics
\begin{align}
  \begin{split}
    \label{eq:observer}
    \begin{bmatrix}
      \ol{x}_{k+1} \\ \ul{x}_{k+1}
    \end{bmatrix}
    &= \pmbox{A - LC}
       \begin{bmatrix}
         \ol{x}_k \\ \ul{x}_k
       \end{bmatrix}
       + \begin{bmatrix}
         B \\ B
       \end{bmatrix} u_k
       + \begin{bmatrix}
         \ol{w} - \ul{v}_L \\
         \ul{w} - \ol{v}_L
       \end{bmatrix} \\
       &\quad + \begin{bmatrix}
         L \\ L
       \end{bmatrix} y_k
       + \begin{bmatrix}
         \mu_d(\ol{x}_k, \ul{x}_k) \\
         \mu_d(\ul{x}_k, \ol{x}_k)
      \end{bmatrix},
  \end{split}
\end{align}
where $\mu_d$ is a mixed-monotone decomposition function for $\mu$
(cf. Definition~\ref{def:decomp}),
$\ol{v}_L = L^\oplus\ol{v} - L^\ominus\ul{v}$, and
$\ul{v}_L = L^\oplus\ul{v} - L^\ominus\ol{v}$.

By construction, the estimates generated by this interval observer
will always contain the true state, and the estimate error will remain
bounded, as stated below.
\begin{prop}
  \label{prop:observer}
  If the initial condition of the interval
  observer~\eqref{eq:observer} satisfies
  $\ul{x}_0 \le x_0 \le \ol{x}_0$, then
  $\ul{x}_k \le x_k \le \ol{x}_k$ for all $k\ge 0$. Furthermore, if
  Assumption~\ref{ass:L} holds, then
  $\ol{x}_k - \ul{x}_k \le \Delta^x < \infty$ for some
  $\Delta^x \in \real$.
\end{prop}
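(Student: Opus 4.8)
The plan is to establish the two claims in turn: the inclusion $\ul{x}_k \le x_k \le \ol{x}_k$ by induction on $k$, and then uniform boundedness of the width $\delta^x_k \triangleq \ol{x}_k - \ul{x}_k$ through a non-negative comparison system governed by Assumption~\ref{ass:L}. The base case of the induction is the hypothesis $\ul{x}_0 \le x_0 \le \ol{x}_0$. For the inductive step, the key manipulation is to eliminate the unmeasured state from the output-injection term by writing $LCx_k = L(y_k - v_k)$, so that, using the decomposition \eqref{eq:decomp}, the true dynamics read
\begin{align*}
  x_{k+1} = (A-LC)x_k + Ly_k - Lv_k + \mu(x_k) + Bu_k + w_k.
\end{align*}
In this form the state $x_k$ enters only through $(A-LC)x_k$ and $\mu(x_k)$, while $y_k$ and $u_k$ are known exactly and $v_k, w_k$ are interval-bounded.

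Assuming $\ul{x}_k \le x_k \le \ol{x}_k$, I would bound each piece against the corresponding term of \eqref{eq:observer}. Proposition~\ref{prop:bounding} gives $(A-LC)x_k \le (A-LC)^\oplus \ol{x}_k - (A-LC)^\ominus \ul{x}_k$. For the measurement noise, $\ul{v} \le v_k \le \ol{v}$ together with $(-L)^\oplus = L^\ominus$ and $(-L)^\ominus = L^\oplus$ yields $-Lv_k \le -\ul{v}_L$, reproducing exactly the constant in the observer. The nonlinearity is handled by the mixed-monotone property of Definition~\ref{def:decomp}: since $\mu_d$ satisfies $\mu_d(x,x) = \mu(x)$ and is increasing in its first and decreasing in its second argument, $\mu(x_k) = \mu_d(x_k,x_k) \le \mu_d(\ol{x}_k, \ul{x}_k)$. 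Finally $w_k \le \ol{w}$. Summing these inequalities reproduces the upper row of \eqref{eq:observer}, so $x_{k+1} \le \ol{x}_{k+1}$; the lower bound $\ul{x}_{k+1} \le x_{k+1}$ follows by the symmetric argument.

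For boundedness, subtract the lower row of \eqref{eq:observer} from the upper row. The matrix contributions collapse via $(A-LC)^\oplus + (A-LC)^\ominus = |A-LC|$, the $Bu_k$ and $Ly_k$ terms cancel, and Proposition~\ref{prop:tight_decomp} bounds the difference of decomposition terms by $\ol{F}_\mu \delta^x_k$. This produces the comparison inequality
\begin{align*}
  \delta^x_{k+1} \le (|A-LC| + \ol{F}_\mu)\,\delta^x_k + d,
\end{align*}
with $d \triangleq (\ol{w} - \ul{w}) + (\ol{v}_L - \ul{v}_L) \ge 0$ constant. Writing $G \triangleq |A-LC| + \ol{F}_\mu$, which is entrywise non-negative, and iterating gives $\delta^x_k \le G^k \delta^x_0 + \sum_{j=0}^{k-1} G^j d$. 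Since Assumption~\ref{ass:L} guarantees $\rho(G) < 1$, we have $G^k \to 0$ and the Neumann series $\sum_{j \ge 0} G^j = (I - G)^{-1}$ converges, so both summands are uniformly bounded and $\delta^x_k \le \Delta^x$ for a finite $\Delta^x$ (e.g. any entrywise upper bound on $\sup_k G^k \delta^x_0 + (I-G)^{-1}d$).

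I expect the main obstacle to be the inductive step rather than the comparison argument: getting the output-injection substitution to align each interval-bounded quantity with the precise constant appearing in \eqref{eq:observer}, and in particular correctly tracking the monotonicity directions of $\mu_d$ so that Proposition~\ref{prop:tight_decomp} and Definition~\ref{def:decomp} apply with the right orientation. Once the comparison inequality is in place, boundedness is a routine non-negative-matrix iteration under the spectral-radius condition.
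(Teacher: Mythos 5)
Your proof is correct and follows essentially the same route as the paper: induction via Proposition~\ref{prop:bounding} and the mixed-monotonicity of $\mu_d$ for the inclusion, and the positive comparison system $\delta^x_{k+1} \le (|A-LC| + \ol{F}_\mu)\delta^x_k + \delta^w + \delta^{v_L}$ with $\rho(|A-LC|+\ol{F}_\mu)<1$ for boundedness. The only cosmetic difference is that you unroll the recursion with a Neumann-series bound where the paper simply invokes ISS of the positive comparison system; your version fills in details the paper leaves implicit.
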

\begin{proof}
  The statement $\ul{x}_k \le x_k \le \ol{x}_k$ holds by induction,
  using Proposition~\ref{prop:bounding} and the fact that $\mu_d$ is
  a mixed-monotone decomposition function for $\mu$.

  We show boundedness of the error term by defining
  $\delta^x_k = \ol{x}_k - \ul{x}_k$ and analyzing the comparison system
  \begin{gather*}
    \delta^x_{k+1} \le
    (|A - LC| + \ol{F}_\mu) \delta^x_k
    + \delta^w + \delta^{v_L}.
  \end{gather*}
  Since $\delta^x_k$ is guaranteed to be positive for all $k$, both
  the actual error dynamics and the comparison system are positive
  systems. Additionally, $|A-LC| + \ol{F}_\mu$ is stable by
  Assumption~\ref{ass:L} thus rendering the comparison system ISS with
  respect to $\delta^w + \delta^{v_L}$, which is bounded.
\end{proof}
\subsection{Interval Predictor}
This section outlines the design of the predictor, which computes
upper and lower bounds of the reachable set from a set of initial
conditions. At every time step, the initial condition of the predictor
is updated with the latest estimate from the interval observer. We
include an output feedback controller in the predictor system, which
reduces the widths of the predicted intervals.
\subsubsection{Feedback Parameterization}

We begin by parameterizing the control input as a feedback term plus a
feedforward term ($u'_k$), in order to reduce the width of the intervals
generated by the predictor system. We use an additional observer to
compute an estimate of the state, which is further leveraged to compute a
control input as follows:
\begin{align}
  \hat{x}_{k+1} &= (A-LC) \hat{x}_k + B u_k + \hat{w} - \hat{v}_L + Ly_k + \mu(\hat{x}_k),
                  \nonumber \\
  u_k &= -K \hat{x}_k + u'_k.
        \label{eq:fdbk}
\end{align}
Here, $L$ is the \emph{same} gain as in the interval observer,
$\hat{w} = \frac{1}{2}(\ol{w} + \ul{w})$, and
$\hat{v}_L = \frac{1}{2}(\ol{v}_L + \ul{v}_L)$.
\begin{rem}
  The structure of the observer in \eqref{eq:fdbk} is similar to that
  of the interval observer \eqref{eq:observer}. In fact, for linear
  systems (where $\mu$ is zero), $\hat{x}_k$ is the \emph{midpoint} of
  the interval observer's estimate, assuming it is initialized as
  such. For nonlinear systems, this is not the case, but
  $\hat{x}_k \in [\ul{x}_k, \ol{x}_k]$.  \bulletend
\end{rem}

With the controller \eqref{eq:fdbk} in place, the closed loop system
is
\begin{align}
  \label{eq:sys-cl}
  \begin{split}
    \begin{bmatrix}
      x_{k+1} \\ e_{k+1}
    \end{bmatrix}
    &= \begin{bmatrix}
      A - BK & -BK \\ 0 & A - LC
    \end{bmatrix}
       \begin{bmatrix}
         x_k \\ e_k
       \end{bmatrix} \\
     +& \begin{bmatrix}
       \mu(x_k) \\
       \mu(\hat{x}_k) - \mu(x_k) \\
     \end{bmatrix}
       + \begin{bmatrix}
         w_k + Bu'_k \\ Lv_k - \hat{v}_L + \hat{w} - w_k
       \end{bmatrix},
  \end{split}
\end{align}
where $e_k = \hat{x}_k - x_k$. Notice that although the feedback
controller \eqref{eq:fdbk} has the observer in the loop, the closed
loop dynamics does not contain the output $y_k$. This is advantageous,
since we will not have access to future values of the output when
designing the predictor. The fact that the output does not appear
means we can directly design a predictor for this closed loop system.

\subsubsection{Predictor Design}
Using a technique similar to the construction of the interval
observer, we can design a predictor for the system
\eqref{eq:sys-cl}. The only difference is the predictor does not have
access to any measurements, since it is predicting forward in time. To
describe the predictions, we introduce a new time index $\ell$, so
that for a variable $\xi$, $\xi_{k,\ell}$ is the prediction $\ell$
steps into the future, starting from time $k$. We will denote the
predicted upper and lower bounds of $x$ as $\ol{z}$ and $\ul{z}$, and
those of $e$ as $\ol{e}$ and $\ul{e}$. These predictions depend on the
(future) feedforward input $u'$, which will be determined by the MPC
controller. The predictor dynamics are given by \eqref{eq:predictor},
where $\ol{\xi}_{k,\ell} = \ol{z}_{k,\ell} + \ol{e}_{k,\ell}$ and
$\ul{\xi}_{k,\ell} = \ul{z}_{k,\ell} + \ul{e}_{k,\ell}$.

By construction, the predictor computes an overapproximation of the
reachable sets of $x_{k+\ell}$ and $e_{k+\ell}$, as long as it is
initialized correctly. We obtain the correct initialization from at
every $k$ from the interval observer. Now we are ready to state the
key result of this section, which states that system
\eqref{eq:predictor} generates correct predictions and that the
prediction error remains bounded for all time.

\begin{prop}
  \label{prop:predictor}
  Let $\ul{x}_k \le x_k \le \ol{x}_k$ for some $k$. If the initial
  condition of the predictor system \eqref{eq:predictor} satisfies
  \begin{gather*}
    \ul{z}_{k,0} \le \ul{x}_k, \
    \ol{z}_{k,0} \ge \ol{x}_k, \
    \ul{e}_{k,0} \le \ul{x}_k - \hat{x}_k, \
    \ol{e}_{k,0} \ge \ol{x}_k - \hat{x}_k,
  \end{gather*}
  then $\ul{z}_{k,\ell} \le x_{k+\ell} \le \ol{z}_{k,\ell}$ and
  $\ul{e}_{k,\ell} \le e_{k+\ell} \le \ol{e}_{k,\ell}$ for all
  $\ell \ge 0$. Furthermore, if Assumption~\ref{ass:L} (on the
  existence of stabilizing gains) holds, the widths of the prediction
  intervals are uniformly bounded for all $\ell \ge 0$, i.e.,
  \begin{gather*}
    \delta^z_{k,\ell} = \ol{z}_{k, \ell} - \ul{z}_{k, \ell} \le \Delta^z, \quad
    \delta^e_{k,\ell} = \ol{e}_{k, \ell} - \ul{e}_{k, \ell} \le \Delta^e,
  \end{gather*}
  where $\Delta^z < \infty$ and $\Delta^e < \infty$ are scalars.
\end{prop}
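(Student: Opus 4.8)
The plan is to prove the two claims separately: correctness of the interval bounds by induction on the prediction index $\ell$, and uniform boundedness of the interval widths by analyzing an associated positive comparison system, mirroring the structure of the proof of Proposition~\ref{prop:observer}.

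\emph{Correctness.} First I would establish $\ul{z}_{k,\ell} \le x_{k+\ell} \le \ol{z}_{k,\ell}$ and $\ul{e}_{k,\ell} \le e_{k+\ell} \le \ol{e}_{k,\ell}$ by induction on $\ell$. The base case $\ell=0$ follows from the stated initialization combined with the hypothesis $\ul{x}_k \le x_k \le \ol{x}_k$. For the inductive step I would bound the closed-loop update \eqref{eq:sys-cl} term by term against \eqref{eq:predictor}. The linear blocks are handled by Proposition~\ref{prop:bounding}: a short computation shows that $\pmbox{M}$ applied to the stacked bounds $[\ol{z}_{k,\ell};\ul{z}_{k,\ell}]$ returns exactly the tight bounds $M^\oplus\ol{z}_{k,\ell}-M^\ominus\ul{z}_{k,\ell}$ and $M^\oplus\ul{z}_{k,\ell}-M^\ominus\ol{z}_{k,\ell}$ of $Mx$ for $x \in [\ul{z}_{k,\ell},\ol{z}_{k,\ell}]$, and the same for the coupling block $\pmbox{-BK}$ acting on the $e$-bounds. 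The nonlinear terms are handled by mixed-monotonicity (Definition~\ref{def:decomp}): $\mu_d(\ul{z}_{k,\ell},\ol{z}_{k,\ell}) \le \mu(x_{k+\ell}) \le \mu_d(\ol{z}_{k,\ell},\ul{z}_{k,\ell})$. The crucial observation is that $\hat{x}_{k+\ell}=x_{k+\ell}+e_{k+\ell} \in [\ul{\xi}_{k,\ell},\ol{\xi}_{k,\ell}]$ by the inductive hypothesis and $\xi=z+e$, which validates the $\xi$-evaluations of $\mu_d$ in the error rows. Bounding the noise via $w_k\in[\ul{w},\ol{w}]$, $v_k\in[\ul{v},\ol{v}]$ and Proposition~\ref{prop:bounding} (the deterministic $Bu'$ and the midpoints $\hat{w},\hat{v}_L$ appear identically in upper and lower rows), the term-wise upper (resp. lower) bounds sum to the right-hand side of \eqref{eq:predictor}, closing the induction.

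\emph{Boundedness.} Next I would subtract the lower-bound rows from the upper-bound rows of \eqref{eq:predictor} to obtain a recursion for the widths $\delta^z_{k,\ell}$ and $\delta^e_{k,\ell}$. Each block $\pmbox{M}$ contributes $|M|$ to the width (its top-minus-bottom row equals $(M^\oplus+M^\ominus)\delta = |M|\delta$), the constant and feedforward/midpoint terms cancel leaving $\delta^w$ and $\delta^{v_L}+\delta^w$, and each $\mu_d$ difference is bounded via Proposition~\ref{prop:tight_decomp} by $\ol{F}_\mu$ times the width of its argument interval, where $\delta^\xi_{k,\ell}=\delta^z_{k,\ell}+\delta^e_{k,\ell}$. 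This yields the positive comparison system
\begin{gather*}
  \begin{bmatrix} \delta^z_{k,\ell+1} \\ \delta^e_{k,\ell+1} \end{bmatrix}
  \le H \begin{bmatrix} \delta^z_{k,\ell} \\ \delta^e_{k,\ell} \end{bmatrix}
  + \begin{bmatrix} \delta^w \\ \delta^{v_L}+\delta^w \end{bmatrix}, \quad
  H = \begin{bmatrix} |A-BK|+\ol{F}_\mu & |BK| \\ 2\ol{F}_\mu & |A-LC|+\ol{F}_\mu \end{bmatrix}.
\end{gather*}
Since the iterates stay nonnegative and $H \ge 0$, if $\rho(H)<1$ the comparison system is ISS with respect to the bounded constant input, giving uniform bounds $\delta^z_{k,\ell}\le\Delta^z$ and $\delta^e_{k,\ell}\le\Delta^e$ for all $\ell$, as claimed.

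\emph{Main obstacle.} The crux is establishing $\rho(H)<1$. Assumption~\ref{ass:L} guarantees only that the two diagonal blocks $|A-BK|+\ol{F}_\mu$ and $|A-LC|+\ol{F}_\mu$ are Schur; because the error-width dynamics are driven by $\delta^z$ through $\ol{F}_\mu$ (via $\delta^\xi$) while the state-width dynamics are driven by $\delta^e$ through $|BK|$, the matrix $H$ is a genuine feedback interconnection rather than block-triangular, so Schurness of the diagonal blocks alone does not imply $\rho(H)<1$. I would close this with a small-gain / M-matrix argument: for the nonnegative $H$, $\rho(H)<1$ is equivalent to $I-H$ being a nonsingular M-matrix, which I would verify via the Schur complement of $I-H$ with respect to the $e$-block, i.e.\ requiring the reduced matrix $(|A-BK|+\ol{F}_\mu)+|BK|\bigl(I-(|A-LC|+\ol{F}_\mu)\bigr)^{-1}(2\ol{F}_\mu)$ to remain Schur. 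This is the step that genuinely exploits the interconnection structure and is the most delicate part of the analysis; once it is in place, the ISS conclusion for the positive comparison system is standard, exactly as in Proposition~\ref{prop:observer}.
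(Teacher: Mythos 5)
Your inclusion argument is exactly the paper's first step: induction on $\ell$, Proposition~\ref{prop:bounding} for the linear blocks, mixed monotonicity of $\mu_d$ for the nonlinear terms, together with the observation that $\hat{x}_{k+\ell}=x_{k+\ell}+e_{k+\ell}\in[\ul{\xi}_{k,\ell},\ol{\xi}_{k,\ell}]$. That part matches and is fine.

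The divergence is in the boundedness argument, and here you are the more careful party. The paper also subtracts the lower rows of \eqref{eq:predictor} from the upper rows to get a positive comparison system, but it asserts a block \emph{upper-triangular} comparison matrix $\tilde{A}$ (zero $(2,1)$ block), so that $\rho(\tilde{A})<1$ follows immediately from Assumption~\ref{ass:L}. Your matrix $H$ carries $2\ol{F}_\mu$ in that block, and your algebra is the correct one for the predictor as written: subtracting the fourth row of \eqref{eq:predictor} from the third, the nonlinear contribution to $\delta^e_{k,\ell+1}$ is
\begin{gather*}
  \bigl[\mu_d(\ol{\xi}_{k,\ell},\ul{\xi}_{k,\ell})-\mu_d(\ul{\xi}_{k,\ell},\ol{\xi}_{k,\ell})\bigr]
  +\bigl[\mu_d(\ol{z}_{k,\ell},\ul{z}_{k,\ell})-\mu_d(\ul{z}_{k,\ell},\ol{z}_{k,\ell})\bigr]
  \le \ol{F}_\mu\bigl(\delta^z_{k,\ell}+\delta^e_{k,\ell}\bigr)+\ol{F}_\mu\,\delta^z_{k,\ell},
\end{gather*}
which is precisely your $2\ol{F}_\mu\delta^z_{k,\ell}+\ol{F}_\mu\delta^e_{k,\ell}$ (and your constant term $\delta^{v_L}+\delta^w$ agrees with the paper's $|L|\delta^v+\delta^w$, since $\delta^{v_L}=|L|\delta^v$). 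The coupling is unavoidable for this predictor because the error rows bound $\mu(\hat{x}_{k+\ell})-\mu(x_{k+\ell})$ by enclosing $\hat{x}_{k+\ell}$ and $x_{k+\ell}$ in \emph{separate} intervals, rather than by a mean-value bound in $e_{k+\ell}$ alone; the paper's zero block is valid only when $\ol{F}_\mu=0$, i.e., in the linear case.

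Consequently, the obstacle you flag is genuine and the paper does not resolve it either --- it is silently dropped behind the phrase ``after some manipulation.'' Assumption~\ref{ass:L} makes the two diagonal blocks of $H$ Schur, but a nonnegative matrix with Schur diagonal blocks and two-sided coupling can have spectral radius larger than one (already for scalars: diagonal entries $0.9$, off-diagonal entries $10$ and $1$ give $\rho=0.9+\sqrt{10}>1$). So your proposal, which honestly leaves $\rho(H)<1$ as an additional condition to be verified via the M-matrix/Schur-complement test, does not prove the proposition under the paper's stated assumptions; but neither does the paper's own proof. Closing the gap requires either strengthening Assumption~\ref{ass:L} to a condition on the full coupled matrix $H$ (your Schur-complement criterion is a natural form for it), or redesigning the error rows of \eqref{eq:predictor} so that $\mu(\hat{x}_{k+\ell})-\mu(x_{k+\ell})$ is bounded through the Jacobian bounds on $\mu$ applied directly to $e_{k+\ell}\in[\ul{e}_{k,\ell},\ol{e}_{k,\ell}]$, which removes the $\delta^z$ dependence and restores the triangular structure the paper's argument needs.
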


\begin{proof}
  Similar to Proposition~\ref{prop:observer}, the statement that
  $\ul{z}_{k,\ell} \le x_{k,\ell} \le \ol{z}_{k,\ell}$ and
  $\ul{e}_{k,\ell} \le e_{k,\ell} \le \ol{e}_{k,\ell}$ holds by
  induction, using Proposition~\ref{prop:bounding} and the fact that
  $\mu_d$ is a mixed-monotone decomposition function for $\mu$. After some
  manipulation, we can write a comparison system for the dynamics of
  the width of the prediction sets, defining $\tilde{A} \triangleq \begin{bmatrix}
      |A - BK| + \ol{F}_\mu & |BK| \\ 0 & |A - LC| + \ol{F}_\mu
    \end{bmatrix}$, as follows:
  \begin{align*}
    \begin{bmatrix}
      \delta^z_{k,\ell+1} \\ \delta^e_{k,\ell+1}
    \end{bmatrix}
    \le \begin{bmatrix}
         \delta^w \\  |L|\delta^v + \delta^w
       \end{bmatrix} + \tilde{A}
     \begin{bmatrix} \delta^z_{k,\ell} \\ \delta^e_{k,\ell}
    \end{bmatrix},
  \end{align*}
  which is a positive system by construction. By
  Assumption~\ref{ass:L},
  $\rho(\tilde{A}) < 1$, which, since
  $\delta^w$ and $\delta^v$ are constant, ensures that the prediction
  error remains bounded.
\end{proof}

\section{Output-feedback MPC}

In this section we combine the interval observer and predictor to
design an output-feedback model predictive controller. We begin by
describing the cost function which the controller will attempt to
minimize.

\subsection{Cost function}
Now that the predictor and its associated notation have been defined,
we can correct the prototype cost function \eqref{eq:cost-proto} into
a form which can be evaluated using only the available predictions. The
cost function is then
\begin{gather*}
  V_N(\mathbf{z}, \mathbf{u}') = \sum_{\ell = 0}^{N-1} c(\hat{z}_{k,\ell}, u'_{k,\ell})
  + V_f(\hat{z}_{k,N}),
\end{gather*}
where $\mathbf{z}$ and $\mathbf{u}'$ are the $N$ and $(N-1)$-long
trajectories of $(\ul{z}_{k,\ell}, \ol{z}_{k,\ell})$ and
$u'_{k,\ell}$, respectively. The term
$\hat{z}_{k,\ell} = \frac{1}{2}(\ol{z}_{k,\ell} + \ul{z}_{k,\ell})$ is
the midpoint of the predicted interval at step $\ell$.
\subsection{Terminal region and controller}

To ensure recursive feasibility of the MPC scheme, we constrain the
system to reach a target set $\X_f$ within a horizon of length
$N$. Furthermore, we assume that there exists a feedback controller
which renders $\X_f$ forward invariant.\footnote{As is common in the
  MPC literature, the existence of this controller is used to prove
  recursive feasibility. In practice, the system does not need to
  \emph{actually} switch to this controller once the target set is
  reached. In our numerical examples we leave the MPC controller
  running in the target set.}
\begin{assumption}
  \label{ass:terminal}
  There exists a terminal set $\X_f$ and an associated controller
  $K_f : \X_f \to \U$ such that for all $x \in \X_f$,
  $f(x, K_f(x), \hat{w}) \in \X_f$.  Furthermore, the terminal set
  satisfies $\X_f \oplus \mathcal{B}_{\Delta^z} \subset \X$, where
  $\mathcal{B}_{\Delta^z}$ denotes the infinity-norm ball of radius
  $\Delta^z$ and $\oplus$ denotes the Minkowski sum.  \bulletend
\end{assumption}

We will use this terminal set to constrain the midpoint of the final
step of the predicted trajectory. Therefore we also require that the
entire predicted interval is safe whenever the midpoint is inside the
target set, which is ensured by the second condition of the
assumption.

\subsection{Optimization problem}
Our goal is to compute a sequence of feedforward inputs which
minimizes the cost while ensuring satisfaction of the state and
control constraints for \emph{any} realization of the state and
measurement disturbances. The optimization problem is
\begin{align}
  \label{eq:mpc}
  \begin{split}
    \min_{\mathbf{z}, \mathbf{e}, \mathbf{u'}}& \ V_N(\mathbf{z}, \mathbf{u'}) \\
    \text{s.t.}& \ \text{\eqref{eq:predictor}}, \ \hat{z}_{k,N}  \in \X_f,
      \  \ul{z}_{k,0} \le \ul{x}_k, \
      \ol{z}_{k,0} \ge \ol{x}_k, \\
    & \ul{e}_{k,0} \le \ul{x}_k - \hat{x}_k, \
      \ol{e}_{k,0} \ge \ol{x}_k - \hat{x}_k, \\
    & [\ul{z}_{k,\ell}, \ol{z}_{k,\ell}] \subset \X, \
      [\ul{u}_{k,\ell}, \ol{u}_{k,\ell}]  \subset \U, \\
    & \begin{bmatrix}
        \ul{u}_{k,\ell} \\ \ol{u}_{k,\ell}
      \end{bmatrix} =
      \pmbox{K}
      \begin{bmatrix}
        \ul{z}_{k,\ell} \\ \ol{z}_{k,\ell}
      \end{bmatrix} +
      \begin{bmatrix}
        u'_{k,\ell} \\ u'_{k,\ell}
      \end{bmatrix},
       \forall \ell \in \{0, \dots, N\}.
  \end{split}
\end{align}

Algorithm~\ref{alg:mpc} describes the proposed improved robust output feedback MPC (IROF-MPC) scheme,
which involves recursively updating the interval observer using
measurements, and using the interval estimate to initialize the
optimization.

\begin{algorithm}[t]
  \caption{Improved Robust Output Feedback-MPC (IROF-MPC) Scheme at Time Step $k$.}
  \label{alg:mpc}
  \begin{algorithmic}[1]
    \renewcommand{\algorithmicrequire}{\textbf{Input:}}
    \renewcommand{\algorithmicensure}{\textbf{Output:}}
    \Require $y_k$, $\ul{z}_{k-1,1}$, $\ol{z}_{k-1,1}$;
    \textbf{Output:} $u_k$, $\ul{z}_{k,1}$, $\ol{z}_{k,1}$
    \State Update $\ul{x}_k$ and $\ol{x}_k$ using $y_k$ and \eqref{eq:observer}.
    \State $\ul{x}_k \gets \max\{\ul{x}_k, \ul{z}_{k-1,1}\}$,
    $\ol{x}_k \gets \min\{\ol{x}_k, \ol{z}_{k-1,1}\}$
    \State Solve \eqref{eq:mpc} to determine $u'_k$
    \State $u_k \gets K \hat{x}_k + u'_k$ \\
    \Return $u_k$
  \end{algorithmic}
\end{algorithm}
\subsection{Feasibility and stability}
\begin{thm}
  Let Assumptions~\ref{ass:bounded}-\ref{ass:terminal} hold. Then
  the MPC controller given by Algorithm~\ref{alg:mpc} is recursively
  feasible, meaning that if \eqref{eq:mpc} is feasible at $k=0$, then it is
  feasible for all $k > 0$. Furthermore, the state and control
  satisfy $x_k \in \X$ and $u_k \in \U$, respectively, for all $k \ge 0$.
\end{thm}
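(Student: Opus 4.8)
The plan is to prove the two assertions in turn: recursive feasibility by induction on $k$ using the classical shift-and-append candidate, and then constraint satisfaction as a corollary of feasibility together with Propositions~\ref{prop:observer} and \ref{prop:predictor}. The base case is the hypothesis that \eqref{eq:mpc} is feasible at $k=0$, so the work is entirely in the induction step: assuming a feasible (indeed optimal) solution at time $k$, I would construct an explicit feasible solution at time $k+1$.

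First I would fix the optimizer at time $k$, with feedforward sequence $u'_{k,0},\dots,u'_{k,N-1}$, predicted intervals satisfying all state and input constraints, and terminal condition $\hat{z}_{k,N}\in\X_f$. After applying $u_k$, receiving $y_{k+1}$, and running lines~1--2 of Algorithm~\ref{alg:mpc}, the intersection $\ul{x}_{k+1}\gets\max\{\ul{x}_{k+1},\ul{z}_{k,1}\}$, $\ol{x}_{k+1}\gets\min\{\ol{x}_{k+1},\ol{z}_{k,1}\}$ guarantees $[\ul{x}_{k+1},\ol{x}_{k+1}]\subseteq[\ul{z}_{k,1},\ol{z}_{k,1}]$. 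The candidate reuses the shifted inputs $u'_{k+1,\ell}=u'_{k,\ell+1}$ for $\ell\in\{0,\dots,N-2\}$, initializes the predictor from the fresh observer interval at the tightest admissible values, and appends a terminal input at step $N-1$. The key observation is that the predictor map \eqref{eq:predictor} is assembled from the monotone operators $\pmbox{\cdot}$ and the mixed-monotone decomposition $\mu_d$, hence it is order-preserving in its initial interval. Propagating the contained initial condition under identical shifted inputs therefore yields $[\ul{z}_{k+1,\ell},\ol{z}_{k+1,\ell}]\subseteq[\ul{z}_{k,\ell+1},\ol{z}_{k,\ell+1}]$ for $\ell\le N-1$, so the old solution's feasibility of the state constraints transfers to the candidate, and feasibility of the input constraints transfers through $\pmbox{K}$ by Proposition~\ref{prop:bounding}, for every stage but the terminal one.

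For the appended terminal stage I would pick the feedforward so that the nominal (midpoint) input realizes $K_f(\hat{z}_{k+1,N-1})$; invoking Assumption~\ref{ass:terminal} should then return the terminal midpoint to $\X_f$, while the inclusion $\X_f\oplus\mathcal{B}_{\Delta^z}\subset\X$ combined with the uniform width bound $\delta^z_{k+1,N}\le\Delta^z$ from Proposition~\ref{prop:predictor} forces the entire terminal interval into $\X$. I expect this terminal step to be the main obstacle: the forward invariance in Assumption~\ref{ass:terminal} is posed for the nominal dynamics $f(x)+BK_f(x)+\hat{w}$, yet for nonlinear $f$ the predicted midpoint $\hat{z}$ does not obey these exact dynamics — it carries additional mixed-monotone correction terms and an $\hat{e}$ contribution that vanish only in the linear case noted in the remark after \eqref{eq:fdbk}. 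Reconciling the actual midpoint propagation with the assumed invariance, either by showing these correction terms preserve membership in $\X_f$ or by strengthening the terminal condition to absorb them, is the delicate point on which I would concentrate the rigor.

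Finally, constraint satisfaction of the closed loop follows once feasibility holds for every $k$. By Proposition~\ref{prop:observer} the true state satisfies $x_k\in[\ul{x}_k,\ol{x}_k]$, and the constraints $\ul{z}_{k,0}\le\ul{x}_k$, $\ol{z}_{k,0}\ge\ol{x}_k$ place this inside $[\ul{z}_{k,0},\ol{z}_{k,0}]\subset\X$, so $x_k\in\X$. Since $\hat{x}_k\in[\ul{x}_k,\ol{x}_k]\subseteq[\ul{z}_{k,0},\ol{z}_{k,0}]$, Proposition~\ref{prop:bounding} applied to the input parameterization in \eqref{eq:mpc} places the applied input $u_k$ inside the admissible interval $[\ul{u}_{k,0},\ol{u}_{k,0}]\subset\U$, giving $u_k\in\U$, for all $k\ge0$.
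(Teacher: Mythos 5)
Your proposal follows essentially the same route as the paper's proof: the shift-and-append candidate $u'_{k+1,\ell}=u'_{k,\ell+1}$, nestedness of the re-initialized predictor intervals inside the shifted old ones (which the paper merely asserts and you justify more explicitly via order-preservation of \eqref{eq:predictor}), a terminal-controller feedforward for the appended stage combined with the width bound $\Delta^z$ and $\X_f\oplus\mathcal{B}_{\Delta^z}\subset\X$, and Propositions~\ref{prop:observer} and~\ref{prop:predictor} together with the constraints of \eqref{eq:mpc} for closed-loop constraint satisfaction. The ``delicate point'' you flag---that for nonlinear $f$ the predicted midpoint $\hat{z}$ does not evolve under the nominal dynamics $f(x,K_f(x),\hat{w})$ for which Assumption~\ref{ass:terminal} is stated---is a genuine subtlety, but the paper's own proof does not resolve it either: it simply asserts that the choice $u'_{1,N-1}=K_f(\hat{z}_{0,N-1})$ keeps $\hat{z}_{1,N}$ in $\X_f$ ``by Assumption~\ref{ass:terminal}'', so your treatment is at least as careful as the published one.
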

\begin{proof}
  Assume there is a sequence of feedforward inputs
  $\{u'_{0,\ell}\}_{\ell=0}^{N-1}$ solving the optimization problem
  \eqref{eq:mpc} at $k = 0$. At $k = 1$, a new sequence
  $\{u'_{1,\ell}\}_{\ell=0}^{N-1}$ can be constructed by letting
  $u'_{1,\ell} = u'_{0, \ell+1}$ for $\ell \in \{0, \dots, N-2\}$. The
  final term in the sequence is given by the feedback law from the
  target set, $u'_{1, N-1} = K_f(\hat{z}_{0,N-1})$, which is
  guaranteed to keep $\hat{z}_{1,N}$ in the target set by
  Assumption~\ref{ass:terminal}. This proves that the target set
  constraint is satisfied. The input constraints are also satisfied by
  this construction.

  To prove that the state constraints are satisfied, note that since
  $[\ul{z}_{1,0}, \ol{z}_{1,0}] \subset [\ul{z}_{0,1}, \ol{z}_{0,1}]$,
  it must be true that
  $[\ul{z}_{1,\ell}, \ol{z}_{1,\ell}] \subset [\ul{z}_{0,\ell+1},
  \ol{z}_{0,\ell+1}] \subset \X$ for $\ell \in \{0, \dots, N-1\}$.
  Additionally, by Assumption~\ref{ass:terminal}, since
  $\hat{z}_{1,N} \in \X_f$, $[\ul{z}_{1,N}, \ol{z}_{1,N}] \subset \X$.
  This concludes the proof that $\{u'_{1,\ell}\}_{\ell=0}^{N-1}$ is a
  feasible solution of the optimization problem. This procedure can be
  repeated recursively.

  The satisfaction of the constraints $x_k \in \X$ and $u_k \in \U$
  follows directly from the constraints of \eqref{eq:mpc}, and the
  fact that by Proposition~\ref{prop:predictor}, the predicted state
  intervals are guaranteed to contain the future states of the system.
\end{proof}

While this theorem guarantees that the state and input always satisfy
their respective constraints, it does not guarantee asymptotic
stability to the target set. Doing so requires some additional
assumptions on the cost function and target set. The cost function
should be positive definite about an equilibrium point inside the
target set, so that it may act as a candidate Lyapunov function for
the closed-loop system.
\begin{assumption}
  \label{ass:equilibrium}
  The target set $\X_f$ contains the origin, which is an equilibrium
  point of the dynamics \eqref{eq:system}. Furthermore, the stage cost
  $c(\cdot, u)$ is positive definite for all $u \in \U$, and the
  terminal cost $V_f$ is positive definite.
\end{assumption}

The following theorem gives a sufficient condition for the stability
of the closed-loop in a certain region of attraction. This region of
attraction can be made larger by extending the prediction horizon of
the controller.
\begin{thm}
  Let Assumptions~\ref{ass:bounded}-\ref{ass:equilibrium} hold. If, for
  all $x \in \X_f$,
  \begin{gather*}
    V_f(f(x, K_f(x), \hat{w})) - V_f(x) + c(x, K_f(x)) \le 0,
  \end{gather*}
  then the closed-loop dynamics of $\hat{z}_{k,0}$ with the MPC
  controller described by Algorithm~\ref{alg:mpc} is asymptotically
  stable. The region of attraction is the set of initial conditions
  for which the optimization problem~\eqref{eq:mpc} is feasible.
\end{thm}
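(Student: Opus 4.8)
The plan is to use the optimal value function of the MPC problem~\eqref{eq:mpc} as a Lyapunov function for the closed-loop midpoint dynamics, following the standard receding-horizon stability argument. Let $V_N^\star(\hat{z}_{k,0})$ denote the optimal value of~\eqref{eq:mpc} as a function of the initial midpoint $\hat{z}_{k,0}$, and let $\{u'_{k,\ell}\}_{\ell=0}^{N-1}$ be the corresponding optimal feedforward sequence. I would establish the two ingredients of a Lyapunov argument: that $V_N^\star$ is positive definite about the origin (sandwiched between class-$\mathcal{K}$ functions), and that it strictly decreases along closed-loop trajectories away from the origin.

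Positive definiteness is the easier step. By Assumption~\ref{ass:equilibrium}, the stage cost $c(\cdot,u')$ and terminal cost $V_f$ are positive definite, so $V_N^\star(\hat{z}_{k,0}) \ge c(\hat{z}_{k,0},u'_{k,0})$ furnishes a class-$\mathcal{K}$ lower bound in $\|\hat{z}_{k,0}\|$. For the matching upper bound, I would evaluate the cost on the admissible terminal control $K_f$ (which is feasible inside $\X_f$ by Assumptions~\ref{ass:terminal}--\ref{ass:equilibrium}); summing the terminal inequality of the theorem telescopes to $V_N^\star(\hat{z}) \le V_f(\hat{z})$ for $\hat{z}\in\X_f$, yielding a local upper bound and hence stability in the sense of Lyapunov, not merely attractivity.

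For the decrease condition, I would reuse the shifted sequence from the recursive-feasibility proof: at time $k+1$, set $u'_{k+1,\ell}=u'_{k,\ell+1}$ for $\ell\in\{0,\dots,N-2\}$ and $u'_{k+1,N-1}=K_f(\hat{z}_{k,N})$. Denoting by $\tilde{V}$ the (suboptimal, but feasible) cost of this sequence and telescoping the stage-cost sum, the interior terms cancel against those of $V_N^\star(\hat{z}_{k,0})$, leaving
\begin{align*}
  \tilde{V} - V_N^\star(\hat{z}_{k,0}) &\le -c(\hat{z}_{k,0},u'_{k,0}) \\
  &\quad + c(\hat{z}_{k,N},K_f(\hat{z}_{k,N})) \\
  &\quad + V_f(f(\hat{z}_{k,N},K_f(\hat{z}_{k,N}),\hat{w})) - V_f(\hat{z}_{k,N}).
\end{align*}
The hypothesis of the theorem renders the last three terms nonpositive, so $V_N^\star(\hat{z}_{k+1,0}) \le \tilde{V} \le V_N^\star(\hat{z}_{k,0}) - c(\hat{z}_{k,0},u'_{k,0})$. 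Positive definiteness of $c$ then gives strict decrease, and combined with the sandwich bounds this yields asymptotic stability on the region where~\eqref{eq:mpc} is feasible; recursive feasibility (the preceding theorem) guarantees trajectories remain in this region, certifying it as the region of attraction.

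The main obstacle is justifying the telescoping step, which implicitly assumes the midpoint realized at $k+1$ matches the predicted midpoint $\hat{z}_{k,1}$. Because Algorithm~\ref{alg:mpc} reinitializes the predictor by intersecting the fresh observer interval with the previously predicted one (line~2), the new interval is only guaranteed to satisfy $[\ul{z}_{k+1,0},\ol{z}_{k+1,0}] \subseteq [\ul{z}_{k,1},\ol{z}_{k,1}]$, so its midpoint need not equal $\hat{z}_{k,1}$. I would close this gap by showing that contracting the initial interval cannot increase the optimal cost, i.e. that $V_N^\star$ is monotone under interval inclusion; this requires verifying that the predictor map~\eqref{eq:predictor} preserves the containment forward in $\ell$ and that the midpoint stage cost respects it. This monotonicity-under-refinement property, rather than the telescoping calculus itself, is where the genuine care is needed.
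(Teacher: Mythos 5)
Your overall route is exactly the one the paper intends: the paper's own ``proof'' is a single sentence deferring to the classical value-function argument of \cite{JR-DM-MD:17}, and your proposal fills in precisely that argument (class-$\mathcal{K}$ sandwich bounds on $V_N^\star$, shifted candidate sequence, terminal-cost decrease). You also correctly identify the one place where this setting genuinely departs from the textbook one: because Algorithm~\ref{alg:mpc} re-initializes the predictor from the intersected observer interval, the interval realized at time $k+1$ need not have the same midpoint as the predicted $[\ul{z}_{k,1}, \ol{z}_{k,1}]$, so the telescoping cancellation is not automatic.

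However, the lemma you propose to close this gap---that contracting the initial interval cannot increase the optimal cost---is false, so no amount of verification will rescue it. The cost is evaluated at interval \emph{midpoints}, and midpoints are not monotone under inclusion: with $c(x,u)=\|x\|^2$, the interval $[-1,1]$ has midpoint $0$ and zero stage cost, while its subinterval $[0.5,1]$ has midpoint $0.75$ and strictly positive stage cost; already for $N=1$ this makes the optimal value strictly larger after refinement. The correct fix goes in the opposite direction and is simpler. The initialization constraints of \eqref{eq:mpc} are one-sided ($\ul{z}_{k+1,0} \le \ul{x}_{k+1}$, $\ol{z}_{k+1,0} \ge \ol{x}_{k+1}$), i.e.\ the predictor's initial interval need only \emph{contain} the observer interval, and line~2 of Algorithm~\ref{alg:mpc} guarantees $[\ul{x}_{k+1}, \ol{x}_{k+1}] \subseteq [\ul{z}_{k,1}, \ol{z}_{k,1}]$. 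Hence the shifted candidate may legally reuse the \emph{entire} previous interval, $[\ul{z}_{k+1,0}, \ol{z}_{k+1,0}] = [\ul{z}_{k,1}, \ol{z}_{k,1}]$ (and likewise the error interval), so that its predicted tube coincides exactly with the tail of the old one, midpoints included; the telescoping is then exact and no monotonicity property is needed. Two caveats remain that neither you nor the paper resolves: (i) reusing the error interval requires $[\ul{x}_{k+1}-\hat{x}_{k+1}, \ol{x}_{k+1}-\hat{x}_{k+1}] \subseteq [\ul{e}_{k,1}, \ol{e}_{k,1}]$, which line~2 does not enforce (it intersects only the state bounds); and (ii) the theorem's hypothesis bounds the decrease of $V_f$ along the nominal dynamics $f(x,K_f(x),\hat{w})$, whereas the telescoped sum involves $V_f$ at the midpoint of a predictor step, which for nonlinear $\mu$ does not evolve by $f(\cdot,\cdot,\hat{w})$; an additional argument (or a restatement of the hypothesis in terms of the midpoint dynamics) is needed there. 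Finally, your class-$\mathcal{K}$ lower bound silently requires $\inf_{u\in\U} c(x,u)$ to be positive definite, which is slightly stronger than Assumption~\ref{ass:equilibrium} as written.
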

\begin{proof}
  The proof is similar to classical results from the MPC literature,
  where the cost function is used to create a Lyapunov function for
  the closed loop system. Due to space concerns, we refer the reader
  to \cite{JR-DM-MD:17} for more details.
\end{proof}

This result relies on the assumption that the terminal controller is
able to attain a cost decrease within the terminal set. This
assumption is standard in the literature on nonlinear MPC, and the
computation of such a controller is beyond the scope of this paper. We
refer the reader to \cite{HC-FA:98} and \cite{WC-JO-DJB:03} for
discussions of this problem in the context of continuous-time systems;
similar results hold in discrete-time.

\section{Numerical Examples}
\subsection{Linear CSTR}
\label{sec:comp}
In this section we compare our improved robust output feedback MPC (IROF-MPC) scheme with a recent work on interval
MPC for linear systems \cite{ADRDS-DE-TR-XP:22}~(DERP). We use the linear model (i.e., $f(x)=Ax$)
of a continuous stirred tank reactor (CSTR) from \cite{ADRDS-DE-TR-XP:22}, where the system matrices are:
\begin{gather*}
  A =
  \begin{bmatrix}
    0.745 & -0.002 \\ 5.610 & 0.780
  \end{bmatrix}, \
  B =
  \begin{bmatrix}
    5.6 \times 10^{-6} \\ 0.464
  \end{bmatrix}, \
  C =
  \begin{bmatrix}
    0 & 1
  \end{bmatrix}.
\end{gather*}
Since the system is linear, the remainder function $\mu$ is the zero
function. The constraints are $\X = [-0.4, 0.4] \times [-25, 25]$ and
$\U = [-15, 15]$, and the disturbance bounds are
$[\ul{w}, \ol{w}] = \alpha([-0.02, 0.02] \times [-0.4, 0.4])$ and
$[\ul{v}, \ol{v}] = \beta[-0.1, 0.1]$, where $\alpha$ and $\beta$ will
be varied to test different magnitudes of noise. The observer gain is
$L = \begin{bmatrix} -0.002 & 0.390 \end{bmatrix}^\top$ and the interval
observer is initialized with $\ul{x}_0 = [-0.1, -0.05]$ and
$\ol{x}_0 = [0.1, 0.05]$.

The controller is designed to track a setpoint outside the feasible
set, $x_r = [-0.25, 27.3]$. The cost function is quadratic in the
error $e_r = x - x_r$, $c(x, u) = e_r^\top H e_r + Ru^2$, with
$H = 100I$ and $R = 0.01$. The horizon is $N = 10$.

We conducted several trials with modified noise bounds, varying
$\alpha$ and $\beta$. The noise is uniformly distributed over the
bounding interval. In every trial, we run our method and the one from
\cite{ADRDS-DE-TR-XP:22} with identical realizations of the noise, in
order to compare their performance. Figure~\ref{fig:mpc-comp} shows
the result of one of those trials. Figure~\ref{fig:alpha-beta} shows
the results (the mean squared tracking error on a trajectory of 50
time steps) of varying one of either $\alpha$ and $\beta$ while
keeping the other parameter fixed. The improved performance of our
method becomes clearer as the noise magnitude is increased. This
increase in performance (i.e., decrease in tracking error) is due to
our inclusion of a closed-loop control inside our predictor, which
limits future uncertainty, narrowing the width of the prediction. This
allows the system trajectory to get closer to the inequality
constraint while still ensuring its satisfaction.
\begin{figure}[]
  \centering
  \includegraphics[width=\columnwidth]{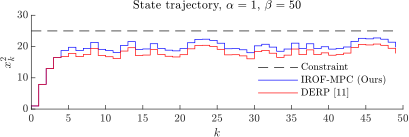}
  \caption{Trajectories of $x^2_k$ (the second state component) for
    one trial of the experiment in Section~\ref{sec:comp}, using the
    approach from \cite{ADRDS-DE-TR-XP:22} (DERP) and IROF-MPC. The
    objective is to operate as close to the constraint as possible.}
  \label{fig:mpc-comp}
\end{figure}
\begin{figure}[]
  \centering
  \includegraphics[width=\columnwidth]{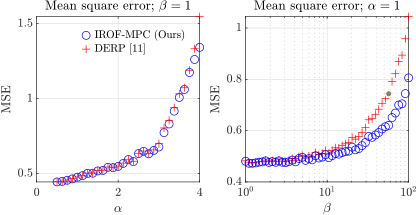}
  \caption{Mean square error of several experiments varying the
    magnitude of state and process noise, using the method from
    \cite{ADRDS-DE-TR-XP:22} (DERP) and our IROF-MPC.}
  \label{fig:alpha-beta}
\end{figure}

\subsection{Robotic Exploration}
In this section we demonstrate the effectiveness of our controller on
a scenario involving a unicycle robot exploring and measuring an
uncertain environment. For this purpose, we use our MPC controller
along with the recently developed Behavioral
Entropy~\cite{AS-CN-SM:24} in the cost function to incentivize the
agent to explore while still reaching the target state.

After performing the JSS decomposition, the time-discretized
($\Delta_t = 0.1$) unicycle robot has dynamics
\begin{gather*}
  A =
  \begin{bmatrix}
    1-\delta & 0 & \Delta_t & \Delta_t  \\
    0 & 1-\delta & \Delta_t & \Delta_t \\
    0 & 0 & 1 & 0 \\
    0 & 0 & 0 & 1-b_v
  \end{bmatrix}, \
  B =
  \begin{bmatrix}
    0 & 0 \\
    0 & 0 \\
    \Delta_t & 0 \\
    0 & \Delta_t
  \end{bmatrix}, \\
  \mu_1(x) = \Delta_t
   (x^4 \cos(x^3) - x^3 - x^4), \ \mu_3(x) = \mu_4(x) = 0, \\
   \mu_2(x) = \Delta_t(x^4 \sin(x^3) - x^3 - x^4).
\end{gather*}

We represent the environment as a compact set
$\E \subset \real^2$. We discretize $\E$ using a square grid
and call the resulting set $\mathcal{D}$. Associated to the grid
$\mathcal{D}$ is an occupancy function
$f_{occ} : \mathcal{D} \to [0,1]$, which represents the probability
that a cell contains the quantity of interest. The occupancy map
$\mathcal{M}$ is the discrete field of $f_{occ}$ over
$\mathcal{D}$. The quantity represented by the occupancy map depends
on the goal of the robot, but for example could be the presence of
survivors in a search and rescue mission or the presence of a toxic
gas. All cells in the initial occupancy map $\mathcal{M}_0$ have a
value of $0.5$.  The goal of the robot is to reduce the uncertain area
by measuring grid cells and updating the occupancy map
$\mathcal{M}_k$. We assume that the robot ``measures'' a grid cell
whenever its location is within that cell.

The total measure of uncertainty throughout the area is given by an
entropy function which is evaluated over the occupancy map. More
specifically, for any admissible generalized entropy $H$ and occupancy
map $\mathcal{M}$, the total uncertainty is
$\bar{H}(\mathcal{M}) = \sum_{x\in\mathcal{D}} H(p(x))$, where $p(x)$
denotes the Bernoulli distribution associated with the occupancy at
$x$. In order for our controller to converge to a target region, we
require that the uncertainty is always reduced over time.  In other
words, the occupancy map should satisfy
$\bar{H}(\mathcal{M}_{k+1}) \le \bar{H}(\mathcal{M}_k)$ for all
$k \ge 0$.

Having defined the generalized entropy, the cost is
\begin{gather*}
  V_N(\mathbf{z}, \mathbf{u}') \hspace{-.1cm}= \hspace{-.15cm}{\sum}_{\ell = 0}^{N-1} \left[c(\hat{z}_{k,\ell}, \hat{u}_{k,\ell})
   \hspace{-.05cm} - \hspace{-.05cm}\lambda U(\hat{z}_{k,\ell}; \mathcal{M}_k)\right] \hspace{-.1cm}+\hspace{-.1cm} V_f(\hat{z}_{k,N}),
\end{gather*}
where the function $U$ represents the information gained by measuring
the grid cell in $\mathcal{D}$ which contains $\hat{z}_k$, given the
current occupancy map $\mathcal{M}_k$, $U(x; \mathcal{M}) = H(p(x))$,
where, as before, $p(x)$ is the Bernoulli distribution associated with
the occupancy at $x$. The parameter $\lambda$ can be tuned to control
the emphasis on exploration vs. progress toward the goal.

For comparison, we also implement a nominal MPC controller, which uses
the observer-based controller \eqref{eq:fdbk} in conjunction with a
simplified optimization problem, with the same objective, state and
input constraints, and predictions given by
$\hat{z}_{k+1} = f(\hat{z}_k) + B (u'_k - K \hat{z}_k) + \hat{w}, \
\hat{z}_{k,0} = \hat{x}_k$.

Figure~\ref{fig:uni-x} shows the closed-loop trajectories of the
simulation. Evidently the nominal controller causes the system to
violate the (non-convex) obstacle constraint, because it does not
account for the noise on the state and output. On the other hand, the
proposed method avoids the obstacle due to the interval predictions
which are robust to noise.

Table~\ref{tbl:comp} lists the computation times (using Acados
\cite{RV:21} on an i5-1240P @ 4.4GHz with 64GB RAM) averaged over 100
timesteps, with horizons of $N=35$ and $N=100$. Each trial is
warm-started with a solution from a generic NLP solver. In all cases,
the computational cost of our method is similar to that of the nominal
controller.
\begin{figure}[]
  \includegraphics[width=\columnwidth]{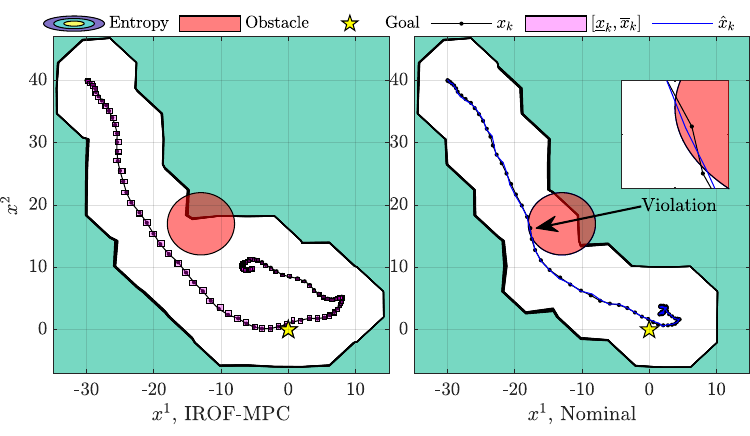}
  \caption{Comparison of nominal output-feedback MPC versus IROF-MPC
    over $100$ time steps in the robot exploration scenario.}
  \label{fig:uni-x}
\end{figure}
\begin{table}[H]
  \caption{Computation times over 100 timesteps, in milliseconds}
  \label{tbl:comp}
  \begin{tabular}{ccccccc}
    \multicolumn{1}{c|}{}         & \multicolumn{3}{c|}{$N = 35$}                                           & \multicolumn{3}{c}{$N = 100$}                                      \\ \cline{2-7}
    \multicolumn{1}{c|}{}         & max                  & avg                  & \multicolumn{1}{c|}{min}  & max                  & avg                  & min                  \\ \hline\hline
    \multicolumn{1}{r|}{IROF-MPC} & 3.89                 & 1.16                 & \multicolumn{1}{c|}{0.58} & 12.743               & 1.96                 & 1.19                 \\
    \multicolumn{1}{r|}{Nominal}  & 4.13                 & 1.36                 & \multicolumn{1}{c|}{1.02} & 11.82                & 2.72                 & 2.018                \\
    \multicolumn{1}{l}{}          & \multicolumn{1}{l}{} & \multicolumn{1}{l}{} & \multicolumn{1}{l}{}      & \multicolumn{1}{l}{} & \multicolumn{1}{l}{} & \multicolumn{1}{l}{}
  \end{tabular}
\end{table}
\section{Conclusion and Future Work}
This paper introduced a robust output-feedback MPC controller. We
described an interval observer and predictor, which were then proven
to generate correct and stable estimates and predictions. In addition,
we showed that the closed loop system under the MPC controller is
recursively feasible, safe, and, with some additional assumptions on
the cost function, asymptotically stable. We showed that our predictor
design, which includes a stabilizing feedback term, was able to
outperform existing methods, especially in the presence of noise. We
further demonstrated its effectiveness in robotic exploration, where
the optimization could be solved over a long horizon.

In the future we plan to investigate the robotic exploration scenario
more thoroughly, including different cost functions. We also plan to
design optimal observer and controller gains which result in the best
closed-loop performance.

\bibliographystyle{IEEEtran}
\bibliography{alias,SM,SMD-add}

\end{document}